     \newcommand{\BF}{{\mathbb {F}}}
     \newcommand{\BZ}{{\mathbb {Z}}}
    \newcommand{\fa}{{\mathfrak{a}}}
    \newcommand{\Hom}{{\mathrm{Hom}}}
    \renewcommand{\Im}{{\mathrm{Im}}}
    \newcommand{\tr}{{\mathrm{tr}}}
    \theoremstyle{plain}
    \newtheorem{thm}{Theorem}[section] \newtheorem{cor}[thm]{Corollary}
    \newtheorem{lem}[thm]{Lemma}  \newtheorem{prop}[thm]{Proposition}
     \newtheorem{defn}[thm]{Definition}
\theoremstyle{remark} 
\theoremstyle{remark} 
\theoremstyle{remark} \newtheorem{example}{Example}
    \numberwithin{equation}{section}
\begin{document}

\title{ON the hulls of group codes}
\author{Xiheng Deng, Yuan Ren}

\address{X. Deng is with the School of Mathematical Sciences, University of Electronic Science and Technology of China,
Chengdu, 611731, China  (e-mail:
2022110801005@std.uestc.edu.cn).}
\address{Y. Ren is with the School of Mathematical Sciences, University of Electronic Science and Technology of China,
Chengdu, 611731, China  (e-mail:
ry@uestc.edu.com). }

\begin{abstract}
Let $\BF_q$ be a finite field and $G$ a finte group with $(|G|,q)=1$. By a group code in $\BF_q[G]$ we mean a two-sided ideal in $\BF_q[G]$. We will prove a general criterion for the existence of group codes with given hull dimension, and then apply it to deduce explicit criterions for existence of group codes with hull dimension $\leq3$. In particular our criterion for the existence of $1$-dimensional hulls generalizes that of \cite{LYMH} which consider only abelian groups $G$.
\end{abstract}

\maketitle


\section{Introduction}  

Let $\BF_q$ be a finite field. For a linear code $C$ over $\BF_q$, its hull is defined to be the intersection \(C \cap C^\perp\) where $C^{\perp}$ is the dual of $C$ with respect to the Euclidean inner product. The study of hulls in coding theory traces its roots to foundational works on cyclic codes, where the duality structure was elegantly described through polynomial ring factorizations (see \cite{Ber}). The hulls with small dimensions have been continuously valued due to their essential role in determining the complexity of algorithms for checking permutation equivalence of two linear codes (see \cite{Leon,Sendrier}) and in computing the automorphism group of a linear code \cite{Leon2}. Recently, Calderbank et al. highlight the role of hulls in quantum stabilizer codes and use self-orthogonal codes (a code is called self-orthogonal if its hull equals to itself) to consturct quantum error-correcting codes (QECCs) (see \cite{CR,CS}). Moreover, Brun et al. give the concept of entanglement-assisted quantum error-correcting codes (EAQECCs), which can be construct without requiring the code to be self-orthogonal \cite{Bowen,Brun}. An amount of study of EAQECCSs like \cite{LP,Sok,Sok2} then promote the theoretical research on the hull of linear codes. Another motivation comes from analyzing the security of the McEliece cryptosystem, exploring the role of the hull of codes in resisting structural attacks (such as support splitting attacks) (see \cite{Sen}). It points out that a large hull may obscure the algebraic structure of the code, thereby enhancing security.

In this paper we investigate hulls of group codes. Let $G$ be a finite group and let $\BF_q[G]$ be the group algebra. By a group code in $\BF_q[G]$ we mean a two-sided ideal in the $\BF_q$-algebra. See \S2.2 for more details about group algebra over fields. Assume that $|G|$ is prime to $q$. Then, any group code decomposes as a sum minimal ones, which are genenrated by primitive central idempotents. Moreover,  we will see that there is a bijection between primititve central idempotents in $\BF_q[G]$ and Galois orbits of isomorphism classes of irreducible representations of $G$ over $\bar{\BF}_q$. In particular, dual codes then corresponde to contragradient representations, which enables us to establsh a crieterion for existence of group codes with given hull dimensions. More precisely, let $Irr(G)$ be the set of isomorphism classes of irreducible reprsesentations of $G$ over $\overline{\BF}_q$. For any $\rho\in Irr(G)$ we denote by $d(\rho)$ (resp. $s(\rho)$) its dimension (resp. the cardinality of the Galois orbit $G_{\BF_q}\rho$), and define
\begin{align*}
    d[\rho]=s(\rho)d(\rho)^2.
\end{align*}
Then, our main result is the following theorem.

\begin{thm}\label{main1}
Let $\BF_q$ be a finite field and let $G$ be a finite group with $(|G|,q)=1$. Then, for any positive integer $k$, there exists a group code $C$ in $\BF_q[G]$ with $\dim_{\BF_q}h(C)=k$ if and only if there is a subset $S\subseteq [Irr(G)]$ satisfying the following conditions: 
\begin{enumerate}
\item $\sum\limits_{[\rho]\in S}d[\rho]=k$;
\item $[\rho^*]\notin S$ for any $[\rho]\in S$.
\end{enumerate}
\end{thm}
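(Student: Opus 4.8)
The plan is to reduce the theorem to a bookkeeping statement about subsets of $[Irr(G)]$, using the Wedderburn decomposition of $\BF_q[G]$ and the interpretation of Euclidean duality through contragredient representations. Since $(|G|,q)=1$, the algebra is semisimple, and by the discussion in \S2 recalled in the introduction we may write $\BF_q[G]=\bigoplus_{[\rho]\in[Irr(G)]}A_{[\rho]}$, where $A_{[\rho]}$ is the simple two-sided ideal (minimal group code) attached to the Galois orbit $[\rho]$; base-changing to $\overline{\BF}_q$ one sees $A_{[\rho]}\otimes_{\BF_q}\overline{\BF}_q\cong\bigoplus_{\sigma\in[\rho]}M_{d(\sigma)}(\overline{\BF}_q)$, so $\dim_{\BF_q}A_{[\rho]}=s(\rho)d(\rho)^2=d[\rho]$. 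Consequently every group code is $C=\bigoplus_{[\rho]\in T}A_{[\rho]}$ for a unique $T\subseteq[Irr(G)]$, and $\dim_{\BF_q}C=\sum_{[\rho]\in T}d[\rho]$.

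The next step, and the genuinely non-formal one, is to identify $C^{\perp}$. Introduce the $\BF_q$-linear anti-automorphism $x\mapsto x^{*}$ of $\BF_q[G]$ determined by $g\mapsto g^{-1}$; the Euclidean form on the group basis is then $\langle x,y\rangle=\tau(xy^{*})$, where $\tau$ reads off the coefficient of the identity. This anti-automorphism permutes the primitive central idempotents, inducing $[\rho]\mapsto[\rho^{*}]$ on orbits, so $A_{[\rho]}^{*}=A_{[\rho^{*}]}$; combined with the fact that distinct Wedderburn blocks annihilate each other, for $x\in A_{[\rho]}$ and $y\in A_{[\sigma]}$ we get $xy^{*}\in A_{[\rho]}A_{[\sigma^{*}]}=0$ unless $[\sigma]=[\rho^{*}]$, i.e. $A_{[\rho]}\perp A_{[\sigma]}$ whenever $[\sigma]\neq[\rho^{*}]$, and in particular each block with $\rho\not\cong\rho^{*}$ is self-orthogonal. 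Grouping the blocks into self-dual ones and dual pairs $\{[\rho],[\rho^{*}]\}$ yields an orthogonal decomposition of $\BF_q[G]$; since the total form is non-degenerate, it is non-degenerate on each self-dual block and restricts to a perfect pairing between $A_{[\rho]}$ and $A_{[\rho^{*}]}$. It follows that $C^{\perp}=\bigoplus_{[\rho]\notin T^{*}}A_{[\rho]}$ with $T^{*}=\{[\rho^{*}]:[\rho]\in T\}$ --- this is the ``dual codes correspond to contragredient representations'' principle from the introduction, and if it is already recorded in \S2 I would simply cite it.

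Granting this, $h(C)=C\cap C^{\perp}=\bigoplus_{[\rho]\in T\setminus T^{*}}A_{[\rho]}$, hence $\dim_{\BF_q}h(C)=\sum_{[\rho]\in T\setminus T^{*}}d[\rho]$. For the ``only if'' direction, set $S=T\setminus T^{*}=\{[\rho]\in T:[\rho^{*}]\notin T\}$: then $\sum_{[\rho]\in S}d[\rho]=\dim_{\BF_q}h(C)=k$, and if some $[\rho]\in S$ had $[\rho^{*}]\in S\subseteq T$, then $[\rho]$ would not lie in $T\setminus T^{*}$; so $S$ satisfies (1) and (2). Conversely, given $S$ as in the statement, take $T=S$; condition (2) says $S\cap S^{*}=\emptyset$, so $T\setminus T^{*}=S$ and $\dim_{\BF_q}h(C)=\sum_{[\rho]\in S}d[\rho]=k$ by (1). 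The only real obstacle is the computation of $C^{\perp}$ via the anti-automorphism together with the non-degeneracy of the Euclidean form on each Wedderburn block; after that the argument is pure combinatorics on $[Irr(G)]$, the one point to state carefully being that self-dual blocks ($\rho\cong\rho^{*}$) never contribute to a hull.
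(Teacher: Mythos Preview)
Your proof is correct and follows essentially the same route as the paper: both parametrize group codes by subsets $T\subseteq[Irr(G)]$ via the Wedderburn decomposition, identify the Euclidean dual as the code attached to the complement of $\{[\rho^*]:[\rho]\in T\}$, and read off $h(C)$ as the sum of the blocks indexed by $\{[\rho]\in T:[\rho^*]\notin T\}$. The only cosmetic difference is that the paper records the duality step as a separate proposition ($C^\perp=(1-e^*)$, hence $h(C)=(e(1-e^*))$) established through an explicit character-sum computation, whereas you derive it from the identity $\langle x,y\rangle=\tau(xy^*)$ together with block orthogonality; the final combinatorics on $[Irr(G)]$ is identical in both arguments.
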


The proof of the above theorem will be given in \S3. From a practical standpoint, codes with small hulls are advantageous in quantum error correction (e.g., simplifying stabilizer measurements) and cryptographic protocols requiring efficient dual code manipulations. In \S3 we will give criterions for existence of group codes with hull dimensions $\leq3$. For example, the following theorem is the case about existence of $1$-dimensional hulls which generalizes previous result for ableian codes(see \cite{LYMH}).

\begin{thm}\label{main2}
Let $\BF_q$ be a finite field and $G$ a finite group with $(|G|, q) = 1$. Then there exists a group
code $C\subseteq\BF_q[G]$ with $\dim_{\BF_q}h(C) = 1$ if and only if $(exp(G^{ab}), q-1)\geq3$, where $G^{ab}$ is the abelianization
of $G$ and $exp(G^{ab})$ its exponent.
\end{thm}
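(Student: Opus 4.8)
The plan is to deduce Theorem~\ref{main2} from Theorem~\ref{main1} by a careful analysis of the quantity $d[\rho]$ for the ``small'' representations of $G$. First I would observe that a group code with $1$-dimensional hull exists if and only if there is some $[\rho]\in[\Irr(G)]$ with $d[\rho]=s(\rho)d(\rho)^2=1$ and $[\rho^*]\neq[\rho]$; indeed, by condition (1) of Theorem~\ref{main1} a set $S$ summing to $1$ must be a singleton $\{[\rho]\}$ with $d[\rho]=1$, and $d[\rho]=1$ forces $d(\rho)=1$ and $s(\rho)=1$. So the whole problem reduces to: \emph{when does $G$ admit a $1$-dimensional representation $\rho$ over $\overline{\BF}_q$ that is fixed by the Galois action (i.e.\ actually defined over $\BF_q$, equivalently takes values in $\BF_q^\times$) but is not self-dual (i.e.\ $\rho\neq\rho^{-1}$, equivalently $\rho^2\neq 1$)?}

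Next I would translate this into group theory. The $1$-dimensional representations of $G$ over any field are exactly the characters of the abelianization $G^{\ab}$, so we are asking for a homomorphism $\chi\colon G^{\ab}\to\overline{\BF}_q^\times$ with image in $\BF_q^\times$ and with $\chi^2\neq 1$. The image of such a $\chi$ is a cyclic subgroup of $\BF_q^\times$, hence cyclic of order dividing $q-1$; and since $\chi$ factors through $G^{\ab}$ its order also divides $\exp(G^{\ab})$. Conversely, any cyclic quotient of $G^{\ab}$ of order $m$ gives such a $\chi$ of order $m$ provided $\BF_q^\times$ contains an element of order $m$, i.e.\ $m\mid q-1$. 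Since $G^{\ab}$ is a finite abelian group, it has a cyclic quotient of order $m$ for every $m$ dividing $\exp(G^{\ab})$ (project onto a cyclic direct factor realizing the exponent, then onto its order-$m$ subquotient). Therefore the set of orders of homomorphisms $G^{\ab}\to\BF_q^\times$ is exactly the set of divisors of $\gcd(\exp(G^{\ab}),q-1)$. A character with $\chi^2\neq 1$ exists in this family if and only if that gcd is at least $3$: if $\gcd(\exp(G^{\ab}),q-1)\leq 2$ every such $\chi$ has order $1$ or $2$ hence $\chi^2=1$, while if the gcd is $\geq 3$ we may take $\chi$ of order equal to the gcd (or any divisor $\geq 3$), which satisfies $\chi^2\neq 1$.

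Putting these equivalences together yields exactly the stated criterion $\gcd(\exp(G^{\ab}),q-1)\geq 3$. The one point that needs a little care — and which I expect to be the main (though modest) obstacle — is the equivalence ``$\rho$ is Galois-fixed $\iff$ $\rho$ is defined over $\BF_q$ $\iff$ $\chi$ has image in $\BF_q^\times$'' for $1$-dimensional $\rho$: one must check that $s(\rho)=1$ really is equivalent to $\chi(g)\in\BF_q$ for all $g$, and that this matches the description of Galois orbits via primitive central idempotents referenced in the introduction. For $1$-dimensional representations this is transparent because $\rho$ is literally given by its character values, and the Galois group $\mathrm{Gal}(\overline{\BF}_q/\BF_q)$ acts on $\overline{\BF}_q^\times$ by $x\mapsto x^q$; thus $\rho$ is Galois-fixed iff $\chi(g)^q=\chi(g)$ for all $g$ iff $\chi(g)\in\BF_q$, so $s(\rho)=1$ is equivalent to $\mathrm{Im}(\chi)\subseteq\BF_q^\times$. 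I would also remark that condition (2) of Theorem~\ref{main1} for a singleton $\{[\rho]\}$ with $d(\rho)=1$ says precisely $\chi^{-1}\neq\chi$, i.e.\ $\chi^2\neq 1$, completing the reduction. Finally I would note that this recovers the abelian case of \cite{LYMH} since for abelian $G$ one has $G^{\ab}=G$.
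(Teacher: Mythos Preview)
Your proposal is correct and follows essentially the same route as the paper: reduce via Theorem~\ref{main1} to the existence of a single $1$-dimensional $\BF_q$-valued character $\chi$ of $G^{\ab}$ with $\chi^2\neq1$, and then settle this by elementary abelian-group/duality considerations. The only cosmetic difference is that the paper writes out $\Hom(G^{\ab},\BF_q^\times)\cong\bigoplus_i\BZ/(n_i,q-1)\BZ$ and checks the last factor, whereas you phrase the same fact as ``the orders of such $\chi$ are exactly the divisors of $\gcd(\exp(G^{\ab}),q-1)$''; these are equivalent formulations of the same argument.
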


The paper is organized as follows. In \S2 we recall some basic notions and results about semi-simple group algebras over general fields, emphasizing the role of central primitive idempotents. In \S3 we study semi-simple group algebras over finite fields and give the proof of Theorem~\ref{main1}. In \S4 we investigate group codes with hull dimension $\leq3$. 

\section{Preliminaries}
In this section we briefly review some backgrounds about semi-simple group algebras over fields. For more details we refer to \cite{Webb}.

\subsection{Semi-simple rings}
In this subsectioin we briefly review the theory of semi-simple rings. Let us first recall some basic notions. Let A be a unital ring. If $\fa$ is an additional subgroup of $A$ such that $xa, ax \in \fa$ for any $x \in A$ and $a \in \fa$, then we say that $\fa$ is a 2-sided ideal, or simply, an ideal. If $e \in A$ satisfies $e^2 = e$, then we call e as an idempotent. Moreover, if e is not a sum of two nonzero idempotents, then we say that e is a primitive idempotent. We denote by $Z(A)$ the center of A, which is the subring consisting of those $x \in A$ such that $xy=yx$ for any $y \in A$. If $e \in Z(A)$ is an idempotent (resp. a primitive idempotent), then we call it as a central idempotent (resp. a central primitive idempotent).
By an $A$-module we always mean a left module over $A$. If $M$ is an $A$-module and it has no submodule except $(0)$ and itself, then we say that $M$is simple. If M is an $A$-module and it is isomorphic to a direct sum of simple $A$-modules, then we say that $M$ is semi-simple. If every $A$-module is semi-simple, then we say that $A$ is a semi-simple ring.
\begin{prop}\label{idem}
  Let $A$ be a unital ring. Then there exists a bijection between decompositions 
  \begin{align*}
      A = A_1 \oplus A_2 \oplus \dots \oplus A_r
  \end{align*}
  as a direct sum of ideals $A_i$ with expressions
  \begin{align*}
      1 = e_1 + e_2 + \dots +e_r
  \end{align*}
  as a sum of orthogonal central idempotents, in such away that $e_i$ is the identity element of $A_i$ and $A_i$ is indecomposable as a ring if and only if  $e_i$ is primitive. If every $A_i$ is indecomposable as a ring, then the subsets $A_i$ and also the primitive central idempotents $e_i$ are uniquely determined; furthermore, every central idempotent can then be written as a sum of certain of these $e_i$.
\end{prop}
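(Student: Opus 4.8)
The plan is to exhibit two mutually inverse constructions. Given a decomposition $A = A_1 \oplus \cdots \oplus A_r$ into two-sided ideals, write $1 = e_1 + \cdots + e_r$ with $e_i \in A_i$; conversely, given orthogonal central idempotents with $1 = e_1 + \cdots + e_r$, set $A_i := e_i A$. Almost everything rests on one elementary observation: if $x \in A_i$ and $y \in A_j$ with $i \neq j$, then $xy \in A_i \cap A_j = 0$, since $A_i$ is a right ideal and $A_j$ a left ideal. The statements about primitivity, uniqueness, and arbitrary central idempotents will then follow formally by re-running the correspondence inside each $A_i$.

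For the forward direction I would first apply the observation with $x = e_i$, $y = e_j$ to get $e_i e_j = 0$ for $i \neq j$, and then expand $e_i = e_i \cdot 1 = e_i(e_1 + \cdots + e_r) = e_i^2$, so the $e_i$ are orthogonal idempotents. Applying it to an arbitrary $a_i \in A_i$ via $a_i = a_i \cdot 1 = 1 \cdot a_i$ gives $a_i e_i = a_i = e_i a_i$, so $e_i$ is a two-sided identity of $A_i$; and then for general $a = a_1 + \cdots + a_r \in A$ one computes $a e_i = a_i e_i = a_i = e_i a_i = e_i a$, so $e_i \in Z(A)$. For the reverse direction, centrality of $e_i$ makes $e_i A = A e_i = A e_i A$ a two-sided ideal, while orthogonality and idempotency give directness of $\sum_i e_i A$ (multiply a relation $\sum_i e_i a_i = 0$ by $e_j$) and show $e_i$ is the identity of $e_i A$. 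That the two constructions are inverse amounts to the identity $A_i = A e_i$: one inclusion holds because $A_i$ is an ideal, the other because each $a_i \in A_i$ equals $a_i e_i$.

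Next I would note, again from the basic observation, that the central idempotents of the ring $A_i$ are exactly the central idempotents of $A$ lying in $A_i$. Then ``$A_i$ is indecomposable as a ring if and only if $e_i$ is primitive (as a central idempotent)'' is just the established bijection applied to $A_i$ in place of $A$: a nontrivial ring decomposition of $A_i$ is the same as an expression of its identity $e_i$ as a sum of two nonzero orthogonal central idempotents of $A_i$, hence of $A$. (Here one uses that if $e = f + g$ with $f,g$ central idempotents lying below $e$, then automatically $fg = 0$, since $f = fe = f(f+g) = f + fg$.) For uniqueness when all $A_i$ are indecomposable, I would take a second such decomposition $1 = f_1 + \cdots + f_s$ and examine the products $e_i f_j$: each is a central idempotent, they are pairwise orthogonal in $j$, and $\sum_j e_i f_j = e_i$, so primitivity of $e_i$ forces exactly one of them to be nonzero, necessarily equal to $e_i$; thus $e_i = e_i f_{\sigma(i)}$ for a unique $\sigma(i)$, and symmetrically $f_j = f_j e_{\tau(j)}$ for a unique $\tau(j)$. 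Then $e_i = e_i f_{\sigma(i)} = e_i f_{\sigma(i)} e_{\tau(\sigma(i))}$ forces $e_i e_{\tau(\sigma(i))} \neq 0$, hence $\tau(\sigma(i)) = i$, and combining $e_i = e_i f_{\sigma(i)}$ with $f_{\sigma(i)} = f_{\sigma(i)} e_i$ gives $e_i = f_{\sigma(i)}$; so the two families coincide up to reindexing, and in particular the $A_i = A e_i$ are uniquely determined. Finally, for an arbitrary central idempotent $f$ I would write $f = \sum_i f e_i$: each $f e_i$ is a central idempotent of the indecomposable ring $A_i$, and such a ring has only $0$ and $e_i$ as central idempotents, so $f = \sum_{i \in I} e_i$ with $I = \{\, i : f e_i \neq 0 \,\}$.

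I expect the difficulty to be entirely organizational rather than conceptual: the proof is a sequence of short multiplicative manipulations, and the only real care needed is (i) consistently distinguishing the three ``absorption'' computations --- two-sided identity, centrality in $A$, and central-idempotents-of-$A_i$-versus-$A$ --- and (ii) being precise about the definition of primitivity so that the automatic orthogonality of summands of a central idempotent is invoked exactly where it is needed. The single step that calls for a genuine idea rather than computation is the uniqueness argument via the cross-products $e_i f_j$.
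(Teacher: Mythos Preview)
Your proof is correct and self-contained. The paper, by contrast, does not prove this proposition at all: it simply cites Proposition~3.6.1 of Webb's textbook. So there is no ``paper's approach'' to compare against beyond the reference; what you have written is essentially the standard textbook argument that one would find in Webb or any comparable source, carried out carefully and with the right attention to the small pitfalls (distinguishing centrality in $A$ versus in $A_i$, and handling the orthogonality of summands when invoking primitivity). The cross-product argument $e_if_j$ for uniqueness is exactly the classical one.

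One minor remark: the paper's stated definition of ``primitive idempotent'' omits the word ``orthogonal,'' which would make the equivalence with indecomposability of $A_i$ awkward to state cleanly. You implicitly work with the standard (and intended) definition --- primitive as a central idempotent means not expressible as a sum of two nonzero orthogonal central idempotents --- and your parenthetical about automatic orthogonality of summands below $e$ is the right observation to reconcile the two readings. This is a defect of the paper's exposition, not of your argument.
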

\begin{proof}
    This is Proposition 3.6.1 of \cite{Webb}.
\end{proof}
For any unital ring $A$, it has a left module structure over itself, which is denoted as $_AA$.
\begin{prop}\label{A_W}
    Let $A$ be a finite dimensional semi-simple algebra over a field $F$. If $_AA \cong \bigoplus \limits_{i=1}^r S_i^{n_i}$ where $S_1, \dots, S_r$ are nonisomorphic simple modules, then
    \begin{align*}
        A \cong \prod \limits_{i=1}^r M_{n_i}(D_i)
    \end{align*}
    where $D_i$ is a division algebra of finite degree over $F$ for each $1 \le i \le r$.We have that $S_1, \dots, S_r$ is a complete set of representations of  the isomorphism classes of simple $A$-modules. Furthermore, when F is algebraically closed, we have $D_i=F (\forall 1\le i \le r)$, $n_i = \dim_FS_i$ and $\dim_FA = n_1^2 + n_2^2 + \dots + n_r^2$.
\end{prop}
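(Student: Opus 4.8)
The plan is to run the classical Artin--Wedderburn argument, reading off the structure of $A$ from its regular representation ${}_AA$ via endomorphism rings. The first step is the ring isomorphism $A^{\op}\cong\End_A({}_AA)$ given by $a\mapsto\rho_a$, where $\rho_a(x)=xa$. One checks directly that each $\rho_a$ is left $A$-linear, that $\rho_a\circ\rho_b=\rho_{ba}$ (so $a\mapsto\rho_a$ is an anti-homomorphism, i.e.\ a homomorphism out of $A^{\op}$), and that it is bijective since any $\varphi\in\End_A({}_AA)$ satisfies $\varphi(x)=\varphi(x\cdot 1)=x\varphi(1)=\rho_{\varphi(1)}(x)$. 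Hence it suffices to compute $\End_A({}_AA)$.

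Next I would feed in the hypothesis ${}_AA\cong\bigoplus_{i=1}^r S_i^{n_i}$ with the $S_i$ pairwise nonisomorphic simple modules, together with Schur's lemma: $\Hom_A(S_i,S_j)=0$ for $i\neq j$, while $D_i:=\End_A(S_i)$ is a division ring. Since $F$ is central in $A$, we have $D_i\subseteq\End_F(S_i)$, so $D_i$ is a division $F$-algebra of finite degree. A standard block-matrix computation then gives
\begin{align*}
\End_A\Big(\bigoplus_{i=1}^r S_i^{n_i}\Big)\cong\prod_{i=1}^r\End_A\big(S_i^{n_i}\big)\cong\prod_{i=1}^r M_{n_i}(D_i),
\end{align*}
the first isomorphism because the cross-terms $\Hom_A(S_i,S_j)$ vanish, the second because an $A$-endomorphism of $S_i^{n_i}$ is an $n_i\times n_i$ matrix with entries in $\End_A(S_i)$, with composition corresponding to matrix multiplication. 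Combining with the first step, $A^{\op}\cong\prod_i M_{n_i}(D_i)$, and passing to opposite rings (using $M_n(D)^{\op}\cong M_n(D^{\op})$ and that $D^{\op}$ is again a division $F$-algebra of the same finite degree) yields $A\cong\prod_i M_{n_i}(D_i')$ with each $D_i'$ a finite-degree division $F$-algebra; renaming $D_i'$ as $D_i$ gives the asserted form.

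For the classification of simple modules, I would observe that under $A\cong\prod_i M_{n_i}(D_i)$ any simple $A$-module is annihilated by all but one factor and is a simple $M_{n_i}(D_i)$-module, and $M_{n_i}(D_i)$ has a unique simple module up to isomorphism, namely the column space $D_i^{n_i}$; tracking these back through the isomorphisms identifies them with the $S_i$ occurring in ${}_AA$ (alternatively: every simple module, being cyclic, is a quotient of ${}_AA$ and hence isomorphic to some $S_i$, and the $S_i$ are distinct by hypothesis), so $\{S_1,\dots,S_r\}$ is a complete irredundant list. Finally, when $F$ is algebraically closed, any finite-dimensional division $F$-algebra $D$ equals $F$: for $d\in D$ the commutative subring $F[d]$ is a finite field extension of $F$, hence $F[d]=F$ and $d\in F$. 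Thus $D_i=F$ for all $i$, so $S_i\cong F^{n_i}$ gives $n_i=\dim_F S_i$, and counting dimensions in $A\cong\prod_i M_{n_i}(F)$ gives $\dim_F A=\sum_i n_i^2$.

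The main obstacle here is bookkeeping rather than conceptual: keeping the opposite-ring twists consistent along the chain $A^{\op}\cong\End_A({}_AA)\cong\prod_i M_{n_i}(D_i)$, and checking that the property ``division algebra of finite degree over $F$'' survives passage to $\op$ and the matrix identifications. Once Schur's lemma and the semisimplicity input (that ${}_AA$ decomposes as stated, which itself uses finite dimensionality) are in place, the rest is routine.
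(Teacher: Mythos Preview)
Your argument is the standard Artin--Wedderburn proof and is correct; the bookkeeping with opposite rings and the Schur-lemma block decomposition is handled properly, as is the algebraically closed case. The paper itself gives no argument at all here, simply citing Theorem 2.1.3 and Corollary 2.1.4 of Webb's \emph{A Course in Finite Group Representation Theory}, so you have supplied exactly the proof the reference would contain.
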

\begin{proof}
    This is Artin-Wedderburn theorem. See Theorem 2.1.3 and Corollary 2.1.4 of \cite{Webb}.
\end{proof}
\begin{cor}
    Let $A$ be a finite dimensional semi-simple algebra over a field $F$. Then, for any ideal $\fa$, there is a unique central idempotent $e \in A$ such that $\fa = (e)$. Such an ideal $\fa = (e)$ is minimal if and only if $e$ is primitive. 
\end{cor}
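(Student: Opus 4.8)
The plan is to combine the Artin--Wedderburn decomposition of Proposition~\ref{A_W} with the idempotent dictionary of Proposition~\ref{idem}, reducing everything to bookkeeping with central idempotents.

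First I would invoke Proposition~\ref{A_W} to write $A \cong \prod_{i=1}^{r} M_{n_i}(D_i)$ with each $D_i$ a finite-dimensional division algebra over $F$. Each factor $A_i := M_{n_i}(D_i)$ is a simple ring, in particular indecomposable as a ring, so by Proposition~\ref{idem} the identities $e_1,\dots,e_r$ of the factors, regarded inside $A$, form a family of orthogonal primitive central idempotents with $1 = e_1+\cdots+e_r$, and every central idempotent of $A$ is uniquely a sum $\sum_{i\in I}e_i$ over some subset $I \subseteq \{1,\dots,r\}$.

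Next I would show that every two-sided ideal $\fa$ of $A$ has the form $\bigoplus_{i\in I}A_i$ for a unique $I$. Since each $e_i$ is central, for $a \in \fa$ we have $a = \sum_i e_ia$ with $e_ia = e_iae_i \in \fa\cap A_i$, whence $\fa = \bigoplus_{i=1}^r e_i\fa$; and $e_i\fa$ is a two-sided ideal of the simple ring $A_i$, hence is $0$ or $A_i$. Setting $I = \{i : e_i\fa = A_i\}$ and $e = \sum_{i\in I}e_i$ gives $\fa = \bigoplus_{i\in I}A_i = Ae = (e)$ because $e$ is central; this proves existence of the central idempotent. For uniqueness, if $\fa = (e) = (e')$ with $e,e'$ central idempotents, then $e \in (e') = Ae'$ forces $ee' = e$, and symmetrically $e'e = e'$, so by centrality $e = ee' = e'e = e'$ (alternatively, uniqueness is immediate from the uniqueness clause of Proposition~\ref{idem} applied to the subset $I$).

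Finally, for the minimality statement: with $\fa = (e) = \bigoplus_{i\in I}A_i$, the same projection argument shows that any nonzero two-sided ideal contained in $\fa$ equals $\bigoplus_{j\in J}A_j$ for some nonempty $J \subseteq I$. Hence $\fa$ is minimal among nonzero two-sided ideals precisely when $|I| = 1$, i.e. when $e = e_i$ for a single $i$, which is exactly the condition that $e$ be primitive. The only mildly delicate point in the whole argument is the reduction ``every two-sided ideal is a direct sum of entire matrix factors,'' which rests on the simplicity of each $M_{n_i}(D_i)$; everything else is routine manipulation of the central idempotents furnished by Propositions~\ref{idem} and~\ref{A_W}.
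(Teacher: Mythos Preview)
Your proof is correct and follows essentially the same route as the paper: invoke Artin--Wedderburn, use simplicity of each $M_{n_i}(D_i)$ to conclude every two-sided ideal is $\bigoplus_{i\in I}A_i$, and then read off the central idempotent $e=\sum_{i\in I}e_i$ via Proposition~\ref{idem}. The only difference is that you spell out the projection argument for the ideal structure and the uniqueness/minimality claims, whereas the paper simply asserts these as well known or ``clear.''
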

\begin{proof}
    By Proposition~\ref{A_W} we may identify $A \cong \prod \limits_{i=1}^r M_{n_i}(D_i)$ with $D_i$ a division algebra over $F$. It is well known that a ring of the  form $M_n(D)$, where $n$ is a positive integer and $D$ is a division ring, has no ideals other than $(0)$ and itself, which implies that it is indecomposable. In particular, it follows from Proposition~\ref{idem} that there are uniquely determined primitive central idempotents $e_1, \dots, e_r$ such that $M_{n_i}(D_i) = (e_i)$. Since any ideal $\fa$ of $A$ is equal to $\bigoplus \limits _{i\in I}M_{n_i}(D_i)$ for some subset $I \subseteq \{ 1, 2, \dots, n \}$, we have $\fa = (e)$ with $e = \sum \limits_{i \in I}e_i$ which is an idempotent as $e_i$'s are orthogonal. The uniqueness and the remaining assertion are clear.  
\end{proof}

\subsection{Semi-simple group algebras over fields}
In this subsection, let $F$ be a field and let $G$ be a finite group. Then the group algebra $F[G]$ consists of all formal sums $\sum \limits _{g \in G}a(g)g$ with addition and multiplication defined as follows: 
\begin{align*}
	\begin{cases}
		\left(\sum_{g\in G}f_1(g)g\right)+\left(\sum_{g\in G}f_2(g)g\right)=\sum_{g\in G}\left(f_1(g)+f_2(g)\right)g\\
		\left(\sum_{g\in G}f_1(g)g\right)\cdot\left(\sum_{g\in G}f_2(g)g\right)=\sum_{g\in G}\left(\sum_{h\in G}f_1(h)f_2(h^{-1}g)\right)g.
	\end{cases}
\end{align*}
Thus, $F[G]$ is a unital $F$-algebra which is commutative if and only if $G$ is abelian. From now on, we will assume that $\left(\lvert G \rvert,char(F)\right)=1$ so that $F[G]$ is a finite dimensional semi-simple algebra over $F$ by Maschke's theorem (see, e.g., Theorem 1.2.1 and Corollary 1.2.5 of \cite{Webb}). 
Fix an algebraic closure $\overline{F}$ of $F$. Recall that, if $\rho : G \rightarrow GL_n(\overline{F})$ is a group homomorphism with $n$ an integer $\ge 1$, then we say that $\rho$ is a representation of $G$ over $\overline{F}$ of degree $n$. If $\rho_1, \rho_2$ are two such representations and there exists a $T \in GL_n(\overline{F})$ such that $\rho_2 = T^{-1} \circ \rho_1 \circ T$, then we say that $\rho_1$ is isomorphic to $\rho_2$, denoted as $\rho_1 \cong \rho_2$. A sub-representation of a representation is a subspace stable under the action of $G$. If a representation has no sub-representation except $(0)$ and itself, then we say that it is irreducible. We denote by $$Irr(G)$$ the set of isomorphism classes of irreducible (finite dimensional) representation of $G$ over $\overline{F}$. For any representation $\rho : G \rightarrow GL_n(\overline{F})$, let $$\chi_{\rho} = \tr(\rho) : G \rightarrow \overline{F}$$ which is called the character of $G$. Note that $\chi_{\rho}$ depends only on the isomorphism class of $\rho$ For any two finite dimensional representations $\rho_1, \rho_2$, we define $$<\chi_{\rho_1}, \chi_{\rho_2}> = \frac{1}{|G|} \sum\limits_{g \in G} \chi_{\rho_1}(g) \chi_{\rho_2}(g^{-1}).$$
\begin{lem}\label{delta}
    If {$\rho_1, \rho_2, \dots, \rho_r$} is a full set of representatives for $Irr(G)$, then $$<\chi_{\rho_i}, \chi_{\rho_j}> = \delta_{ij}$$ where $\delta_{ij}$ is the Kronecker delta.
\end{lem}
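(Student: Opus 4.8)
\emph{Sketch of proof.} The plan is to reinterpret the pairing $\langle\chi_{\rho_i},\chi_{\rho_j}\rangle$ as the dimension of a space of $G$-homomorphisms and then conclude with Schur's lemma. Since $\overline{F}$ may have positive characteristic (coprime to $|G|$), I will deliberately avoid the classical argument that divides by $\dim\rho_i$, which need not be invertible in $\overline{F}$; the point that makes this unnecessary is that the only dimensions occurring at the end are $0$ and $1$.

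First I would record two elementary facts. (i) If $U$ is a finite-dimensional vector space over an arbitrary field $k$ and $e\in\End_k(U)$ is an idempotent, then $\tr(e)=\bigl(\dim_k\Im(e)\bigr)\cdot 1_k$, as one sees from a basis of $U$ adapted to $U=\Im(e)\oplus\Ker(e)$. (ii) Since $(|G|,char(F))=1$, the element $\varepsilon:=\tfrac{1}{|G|}\sum_{g\in G}g$ is a well-defined idempotent of $\overline{F}[G]$, and for any representation $\sigma\colon G\to GL(U)$ the operator $\sigma(\varepsilon)=\tfrac{1}{|G|}\sum_{g\in G}\sigma(g)\in\End_{\overline{F}}(U)$ is the projection of $U$ onto its subspace $U^{G}$ of $G$-invariant vectors (it fixes $U^{G}$ pointwise, and its image lies in $U^{G}$ because $h\varepsilon=\varepsilon$ in $\overline{F}[G]$ for every $h\in G$).

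Next I would apply (i) and (ii) to $U:=\Hom_{\overline{F}}(V_j,V_i)\cong V_j^{*}\otimes_{\overline{F}}V_i$ with its natural $G$-action $\sigma$, where $V_i,V_j$ are the underlying spaces of $\rho_i,\rho_j$. The standard character identities for duals and tensor products give $\chi_{\sigma}(g)=\chi_{\rho_j}(g^{-1})\chi_{\rho_i}(g)$, so
\[
\langle\chi_{\rho_i},\chi_{\rho_j}\rangle=\frac{1}{|G|}\sum_{g\in G}\chi_{\rho_i}(g)\chi_{\rho_j}(g^{-1})=\tr\bigl(\sigma(\varepsilon)\bigr)=\bigl(\dim_{\overline{F}}U^{G}\bigr)\cdot 1_{\overline{F}},
\]
and $U^{G}=\Hom_{\overline{F}[G]}(V_j,V_i)$. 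Since $\rho_i,\rho_j$ are irreducible, Schur's lemma gives $\Hom_{\overline{F}[G]}(V_j,V_i)=0$ when $i\neq j$ (recall $\rho_i\not\cong\rho_j$, as the $\rho_\bullet$ form a full set of representatives), while for $i=j$ it is a finite-dimensional division algebra over the algebraically closed field $\overline{F}$, hence equals $\overline{F}$ and has dimension $1$. In both cases $\langle\chi_{\rho_i},\chi_{\rho_j}\rangle=\delta_{ij}\cdot 1_{\overline{F}}=\delta_{ij}$.

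The one point requiring care is the last equality: in positive characteristic, fact (i) only computes $\dim_{\overline{F}}U^{G}$ modulo the characteristic of $\overline{F}$, so a priori we recover $\langle\chi_{\rho_i},\chi_{\rho_j}\rangle$ only modulo the characteristic. This loses nothing here precisely because Schur's lemma forces that dimension to be $0$ or $1$, values that remain distinct in every field. The auxiliary facts — the character of $V_j^{*}\otimes_{\overline{F}}V_i$, and the triviality of finite-dimensional division algebras over an algebraically closed field — are routine, and I would only state them.
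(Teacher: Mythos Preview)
Your proof is correct and takes a genuinely different route from the paper. The paper's argument invokes Brauer character theory: it uses the Fong--Swan--Rukolaine theorem to lift each $\rho_i$ to characteristic zero, identifies $\chi_{\rho_i}$ with the reduction of the corresponding Brauer character $\phi_i$, and then appeals to the row orthogonality relations for Brauer characters. Your argument, by contrast, stays entirely inside $\overline{F}[G]$: you realize $\langle\chi_{\rho_i},\chi_{\rho_j}\rangle$ as the trace of the averaging idempotent acting on $\Hom_{\overline{F}}(V_j,V_i)$, identify this with $\dim_{\overline{F}}\Hom_{\overline{F}[G]}(V_j,V_i)$ read modulo the characteristic, and finish with Schur's lemma over an algebraically closed field. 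Your approach is more elementary and self-contained---it needs neither lifting nor the machinery of Brauer characters---while the paper's route, though heavier, places the lemma inside a general modular framework. Your explicit attention to the positive-characteristic subtlety (that the trace only recovers the dimension modulo $p$, which is harmless here because Schur forces that dimension to be $0$ or $1$) is exactly the point that makes the direct argument go through, and it is a step the paper avoids by passing to characteristic zero first.
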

\begin{proof}
    For any $1 \le i \le r$, let $\phi_i$ be the Brauer character of $\rho_i$ (see, e.g., page 193 of \cite{Webb}). Since $(|G|,p)=1$, every irreducible representation of $G$ over $\overline{F}$ is liftable by Fong-Swan-Rukolaine theorem (see, e,g, Theorem 9.4.12 of \cite{Webb}), and we have $\chi_{\rho_i}$ is the reduction of $\phi_i$ for $1\le i \le r$. Therefore, the assertion follows from the row orthogonality relations for Brauer characters(see Theorem 10.2.2 of \cite{Webb}). 
\end{proof}
\begin{defn}
    For any irreducible representation $\rho$ of $G$ over $\overline{F}$ with dimension $d$, we define $$e(\rho) = \frac{d}{|G|}\sum\limits_{g \in G}\chi(g^{-1})g.$$
\end{defn}
\begin{prop}
    Let $\{\rho_1, \rho_2, \dots, \rho_r\}$ be a full set of representatives for $Irr(G)$. Then there is an isomorphism of $\overline{F}$-algebra: $$\Psi:\overline{F}[G]\cong\prod\limits_{i=1}^rM_{d_i}(\overline{F})$$ with $d_i=\dim\rho_i\ (1\le i\le r)$, which maps any $\sum\limits_{g\in G}f(g)g$ to $(\dots, \sum\limits_{g\in G}f(g)\rho_i(g), \dots)_{1\le i\le r}$. Moreover, if $\epsilon_i=(\dots,0, I_{d_i\times d_i},0, \dots)\in\prod\limits_{i=1}^rM_{d_i}(\overline{F})$ for any $1\le i\le r$, then $\Psi(e(\rho_i))=\epsilon_i$. In particular, $\{e(\rho_i)|1\le i\le r\}$ is the set of all central primitive idempotents of $\overline{F}[G]$.
\end{prop}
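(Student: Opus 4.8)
The plan is to show that $\Psi$ is a well-defined injective $\overline F$-algebra homomorphism, to upgrade injectivity to bijectivity by a dimension count, then to evaluate $\Psi$ on each $e(\rho_i)$ by Schur's lemma, and finally to read off the statement about idempotents from the corollary following Proposition~\ref{A_W}. For well-definedness, note that each $\rho_i\colon G\to GL_{d_i}(\overline F)$ extends $\overline F$-linearly to a map $\Psi_i\colon\overline F[G]\to M_{d_i}(\overline F)$ sending $\sum_{g}f(g)g$ to $\sum_g f(g)\rho_i(g)$, and this is unital and multiplicative precisely because $\rho_i$ is a group homomorphism; hence $\Psi=(\Psi_1,\dots,\Psi_r)$ is an $\overline F$-algebra homomorphism with the asserted formula.

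For injectivity, write $A=\overline F[G]$ and let $S_i$ denote the simple module underlying $\rho_i$. Then $\ker\Psi_i=\Ann_A(S_i)$, so $\ker\Psi=\bigcap_{i=1}^r\Ann_A(S_i)$. By Maschke's theorem $A$ is semisimple, and by Proposition~\ref{A_W} the modules $S_1,\dots,S_r$ are, up to isomorphism, all the simple $A$-modules and each occurs as a summand of $_AA$; consequently an element annihilating every $S_i$ annihilates $_AA$, hence kills $1\in A$ and is $0$. Thus $\Psi$ is injective. On the other hand, Proposition~\ref{A_W} (with $\overline F$ algebraically closed) gives $|G|=\dim_{\overline F}A=\sum_{i=1}^r d_i^{2}=\dim_{\overline F}\prod_{i=1}^r M_{d_i}(\overline F)$, so an injective $\overline F$-linear map between two spaces of this common finite dimension is an isomorphism.

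It remains to compute $\Psi(e(\rho_i))$. Fix $i,j$ and set $M=\sum_{g\in G}\chi_{\rho_i}(g^{-1})\rho_j(g)\in M_{d_j}(\overline F)$, so that $\Psi_j(e(\rho_i))=\tfrac{d_i}{|G|}M$. Since $\chi_{\rho_i}$ is a class function, the substitution $g\mapsto hgh^{-1}$ shows $\rho_j(h)M=M\rho_j(h)$ for all $h\in G$; as $\rho_j$ is irreducible over the algebraically closed field $\overline F$, Schur's lemma forces $M=\lambda I_{d_j}$ for some $\lambda\in\overline F$. Taking traces and invoking Lemma~\ref{delta}, $d_j\lambda=\tr M=\sum_{g\in G}\chi_{\rho_i}(g^{-1})\chi_{\rho_j}(g)=|G|\langle\chi_{\rho_i},\chi_{\rho_j}\rangle=|G|\delta_{ij}$, whence $\Psi_j(e(\rho_i))=\tfrac{d_i}{|G|}\cdot\tfrac{|G|\delta_{ij}}{d_j}I_{d_j}=\delta_{ij}I_{d_i}$, i.e. $\Psi(e(\rho_i))=\epsilon_i$. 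Because each $M_{d_i}(\overline F)$ is indecomposable as a ring, the $\epsilon_i$ are exactly the central primitive idempotents of $\prod_i M_{d_i}(\overline F)$ by the corollary following Proposition~\ref{A_W}; transporting them back along the isomorphism $\Psi$ shows that $\{e(\rho_1),\dots,e(\rho_r)\}$ is the complete set of central primitive idempotents of $\overline F[G]$.

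The only step carrying genuine content is the evaluation of $\Psi(e(\rho_i))$: the Schur-lemma argument combined with the trace identity furnished by Lemma~\ref{delta} is the heart of the proof, whereas well-definedness, injectivity, the dimension count, and the final transport of idempotents are formal.
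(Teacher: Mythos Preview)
Your proof is correct and shares the same core with the paper's: the computation of $\Psi(e(\rho_i))=\epsilon_i$ via Schur's lemma together with Lemma~\ref{delta} is carried out identically. The only real difference is the order of the argument. The paper first establishes $\Psi(e(\rho_i))=\epsilon_i$ and uses this to conclude that $\Psi$ is surjective, then appeals to the dimension count from Proposition~\ref{A_W} to obtain bijectivity. You instead prove injectivity first, by identifying $\ker\Psi$ with $\bigcap_i\Ann_A(S_i)$ and using that every simple occurs in ${}_AA$, and only afterwards compute $\Psi(e(\rho_i))$. Your route has the small advantage that the passage from ``$\Psi$ hits every $\epsilon_i$'' to ``$\Psi$ is surjective'' (which the paper leaves implicit, and which really needs a density-type statement for each $\Psi_i$) is bypassed entirely.
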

\begin{proof}
    Let us first show that $\Psi(e(\rho_i))=\epsilon_i$ for any $1\le i\le r$. Indeed, the $j$-th component $T_j$ of $\Psi(e(\rho_i))$ equals $\frac{d_i}{|G|}\sum\limits_{g\in G}\chi_{\rho_i}(g^{-1})\rho_j(g)$. Since characters are class function, this operator commutes with $\rho_j$ and is hence a scalar by Schur's lemma (see Theorem 2.1.1 of Webb\cite{Webb}). Moreover, this scalar equals $$\frac{1}{d_i}\text{tr}(T_j)=<\chi_{\rho_i}, \chi_{\rho_j}>,$$ so the claim follows from Lemma~\ref{delta}. In particular, we find that $\Psi$ is surjective. Since the two sides of $\Psi$ have the same dimension by Proposition~\ref{A_W}, it follows that $\Psi$ is in fact an isomorphism. The remaining assertion is clear.
\end{proof}
\begin{defn}
    Let $G_F=Gal(\overline{F}/F)$ be the absolute Galois group of $F$. For any $\sigma\in G_F$ and any representation $\rho$, we denote by $\sigma(\rho)=\sigma\circ\rho$ which is also a representation and is irreducible when $\rho$ is so. This establishes an action of $G_F$ on $Irr(G)$. Let $$[Irr(G)]=G_F\setminus Irr(G).$$
    For any $\rho\in Irr(G)$, let $[\rho]=G_F\rho$ be its orbit under $G_F$, and we define $$e[\rho]=\sum\limits_{\rho'\in[\rho]}e(\rho').$$
\end{defn}
\begin{cor}
    $\{e[\rho]|[\rho]\in[Irr(G)]\}$ is the set of all primitive central idempotents of $F[G]$.
\end{cor}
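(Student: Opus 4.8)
The plan is to descend the known description of primitive central idempotents of $\overline{F}[G]$ (namely $\{e(\rho) : \rho \in Irr(G)\}$) down to $F[G]$ by taking Galois orbits, exploiting Proposition~\ref{idem} to match idempotent decompositions with ring decompositions. The first step is to observe that $F[G] \hookrightarrow \overline{F}[G]$ and that $G_F$ acts on $\overline{F}[G]$ coefficientwise, with fixed subring exactly $F[G]$ (since $\sigma(\sum_g f(g)g) = \sum_g \sigma(f(g))g$, and $x$ is fixed by all $\sigma$ iff every coefficient lies in $F$). Under the isomorphism $\Psi$, the action of $\sigma \in G_F$ on $\overline{F}[G]$ corresponds to the obvious action on $\prod_i M_{d_i}(\overline{F})$ twisted by the permutation of factors induced by $\rho_i \mapsto \sigma(\rho_i)$; in particular $\sigma$ permutes the central primitive idempotents $e(\rho_i)$ according to the $G_F$-action on $Irr(G)$, i.e. $\sigma(e(\rho)) = e(\sigma\rho)$. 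This last identity can also be checked directly from the defining formula $e(\rho) = \frac{d}{|G|}\sum_g \chi_\rho(g^{-1})g$, since $\sigma\chi_\rho = \chi_{\sigma\rho}$ and $d$, $|G|$ are rational integers prime to $\mathrm{char}(F)$, hence fixed.

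Next I would show that each $e[\rho] = \sum_{\rho' \in [\rho]} e(\rho')$ actually lies in $F[G]$: it is a finite sum over a full $G_F$-orbit, hence fixed by every $\sigma \in G_F$, hence has coefficients in $F$. Moreover the $e[\rho]$, as $[\rho]$ ranges over $[Irr(G)]$, are pairwise orthogonal idempotents (being sums over disjoint subsets of the pairwise orthogonal idempotents $e(\rho')$) and they sum to $1$ (since the $e(\rho')$ partition the identity of $\overline{F}[G] \supseteq F[G] \ni 1$). So $1 = \sum_{[\rho]} e[\rho]$ is an orthogonal decomposition of $1$ by central idempotents of $F[G]$, giving a ring decomposition $F[G] = \prod_{[\rho]} F[G]e[\rho]$ by Proposition~\ref{idem}. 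It remains to prove that each $e[\rho]$ is primitive, i.e. that $A_{[\rho]} := F[G]e[\rho]$ is indecomposable as a ring. For this I would argue that $A_{[\rho]} \otimes_F \overline{F} \cong \prod_{\rho' \in [\rho]} M_{d_{\rho'}}(\overline{F})$, and that $G_F$ acts transitively on the factors; any central idempotent of $A_{[\rho]}$ would, after base change, be a $G_F$-stable sum of a subset of the $\epsilon_{\rho'}$, and transitivity of the $G_F$-action on $[\rho]$ forces that subset to be empty or everything. Hence $Z(A_{[\rho]})$ contains no nontrivial idempotents, so $A_{[\rho]}$ is indecomposable and $e[\rho]$ is primitive. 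By the uniqueness clause of Proposition~\ref{idem}, $\{e[\rho] : [\rho] \in [Irr(G)]\}$ is then exactly the set of primitive central idempotents of $F[G]$.

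The main obstacle is the primitivity step: one must rule out that some $A_{[\rho]}$ splits further over $F$ even though its factors over $\overline F$ are glued together by Galois. The clean way is the base-change argument above — a central idempotent $e \in Z(A_{[\rho]})$ remains central in $A_{[\rho]}\otimes_F\overline F$, where the central idempotents are precisely the $\{0,1\}$-combinations of the $\epsilon_{\rho'}$; since $e \in A_{[\rho]}$ is $G_F$-fixed and $G_F$ permutes the $\epsilon_{\rho'}$ transitively, the supporting subset is $G_F$-stable and nonempty $\Rightarrow$ it is all of $[\rho]$ $\Rightarrow$ $e = e[\rho]$. (One uses here that $Z(A_{[\rho]})\otimes_F\overline F = Z(A_{[\rho]}\otimes_F\overline F)$, which is immediate from the product-of-matrix-algebras description.) The rest is bookkeeping with orthogonal idempotents and Galois descent of coefficients.
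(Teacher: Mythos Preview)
Your proof is correct and uses the same essential mechanism as the paper: identify $F[G]$ with the $G_F$-fixed subring of $\overline{F}[G]$, use $\sigma(e(\rho))=e(\sigma\rho)$, and deduce that the primitive central idempotents of $F[G]$ are exactly the orbit-sums $e[\rho]$. The only organizational difference is direction: the paper starts from an arbitrary primitive central idempotent $e\in F[G]$, writes it as a sum of $e(\rho_i)$'s, observes that $G_F$-invariance forces full orbits and primitivity forces a single orbit; you instead build the decomposition $1=\sum_{[\rho]}e[\rho]$ first and then verify primitivity of each $e[\rho]$ by a base-change/transitivity argument, invoking the uniqueness clause of Proposition~\ref{idem} to conclude.
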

\begin{proof}
    Let $e\in F[G]$ be a primitive central idempotent. The inclusion $F[G]\subseteq\overline{F}[G]$ identifies $F[G]$ as the subring of $\overline{F}[G]$ consisting of $G_F$-invariant elements. Then, we have $e=\sum\limits_ie(\rho_i)$ for some irreducible $\rho_i$. Since $e$ is $G_F$-invariant, the whole orbit of $e(\rho_i)$ appear in the sum whenever $e(\rho_i)$ itself does, so that $e=\sum\limits_ie[\rho_i]$ and the assertion follows as $e$ is primitive. 
\end{proof}

\section{Duals and hulls of group codes}
\subsection{Hulls of group codes}Let $\BF_q$ be a finite field of characteristic $p$. Recall that, if $C$ is a subspace of $\BF_q^n$ with dimension $k$, then we say that $C$ is a $q$-ary linear $[n,k]$-code, or simply, an $[n,k]_q$-code. Let $$<\ ,\ >:\BF_q^n\times\BF_q^n\rightarrow\BF_q$$ be the inner product on $\BF_q^n$ sending $u=(u_1, \dots, u_n)$ and $v=(v_1, \dots, v_n)$ to $u_1v_1+\dots+u_nv_n$. For any linear code $C\subseteq\BF_q^n$, we define $$C^{\perp}=\{v|<v,u>=0(\forall u\in C)\}$$ which is called as the dual code of $C$. It is easy to see that
\begin{enumerate}
    \item $\dim_{\BF_q}C+\dim_{\BF_q}C^{\perp}=n$;
    \item $(C^{\perp})^{\perp}=C$;
    \item $C_1\subseteq C_2$ if and only if $C_2^{\perp}\subseteq C_1^{\perp}$.
\end{enumerate}
\begin{defn}
    Let $G$ be a finite group. Then a 2-sided ideal of $\BF_q[G]$ is called a group code in $\BF_q[G]$.
\end{defn}
\begin{example}
    If $G=\BZ/n\BZ$ for some integer $n\ge1$, then $\BF_q[G]\cong\BF_q[x]/(x^n-1)$. Therefore a group code in such situation is a so-called cyclic code. 
\end{example}
For any representation $\rho:G\rightarrow GL_n(\overline{F}_q)$ we define $$\rho^*:G\rightarrow GL_n[\overline{\BF}_q],\ g\mapsto{^T\rho(g)^{-1}}$$
which is called as the contragredient of $\rho$. It is clear that if $\rho$ is irreducible, $\rho^*$ is also irreducible. For any $z=\sum\limits_{g\in G}f(g)g$ we define $$z^*=\sum\limits_{g\in G}f(g^{-1})g.$$
\begin{prop}
    Let $G$ be a finite group with $(|G|,p)=1$. Then, for any group code $C=(e)$ in $\BF_q[G]$, we have $C^{\perp}=(1-e^*)$. In particular, we have $h(C)=(e(1-e^*))$. 
\end{prop}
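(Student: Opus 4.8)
The plan is to reduce everything to algebra in $\BF_q[G]$ via the observation that the Euclidean inner product is detected by the anti-involution $z\mapsto z^*$. Concretely, identify $z=\sum_{g\in G}f(g)\,g$ with the coordinate vector $(f(g))_{g\in G}\in\BF_q^{|G|}$, and let $\lambda\colon\BF_q[G]\to\BF_q$ be the $\BF_q$-linear functional sending $\sum_{g}f(g)\,g$ to the coefficient $f(1)$ of the identity. First I would verify, by a direct computation of the coefficient of $1$ in the product $z\,w^*$, that $\langle z,w\rangle=\lambda(z\,w^*)$ for all $z,w\in\BF_q[G]$. Alongside this I would record the routine facts that $z\mapsto z^*$ is an $\BF_q$-linear anti-automorphism of $\BF_q[G]$ with $(z^*)^*=z$ which carries central idempotents to central idempotents (since $g\mapsto f(g^{-1})$ is again a class function when $f$ is), and that the symmetric bilinear form $(u,v)\mapsto\lambda(uv)$ is non-degenerate: taking $u=g^{-1}$ recovers the $g$-th coefficient of $v$, so $\lambda(uv)=0$ for all $u$ forces $v=0$.

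With these in hand I would compute $C^\perp$. Since $e$ is a central idempotent, $C=(e)=\BF_q[G]\,e$, so a general element of $C$ is $y e$ with $y\in\BF_q[G]$; hence $w\in C^\perp$ iff $\lambda\bigl(y\,(e w^*)\bigr)=\langle ye,w\rangle=0$ for every $y$, which by non-degeneracy is equivalent to $e w^*=0$. Applying the anti-involution $*$ turns this into $w e^*=0$. Now $e^*$ is a central idempotent, and $\{\,w:w e^*=0\,\}=\BF_q[G]\,(1-e^*)=(1-e^*)$: one inclusion follows from $y(1-e^*)e^*=0$, the other from $w e^*=0\Rightarrow w=w(1-e^*)$. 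This yields $C^\perp=(1-e^*)$.

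For the hull it then remains to note that for two commuting central idempotents $f_1,f_2$ one has $(f_1)\cap(f_2)=(f_1f_2)$: the inclusion $\supseteq$ is immediate, and if $x$ lies in both ideals then $x=xf_1=(xf_2)f_1=xf_1f_2\in(f_1f_2)$. Applying this with $f_1=e$ and $f_2=1-e^*$ gives $h(C)=C\cap C^\perp=(e(1-e^*))$. The only genuine content here is the identity $\langle z,w\rangle=\lambda(z w^*)$ together with the non-degeneracy step; the main place an error could slip in is the inverse-bookkeeping in that computation, since one must check that the $*$ lands on the second variable (and that one does not end up with $\lambda(zw)$ or $\lambda(z^*w)$ instead), after which the rest is formal manipulation of central idempotents.
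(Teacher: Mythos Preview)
Your argument is correct. The identification $\langle z,w\rangle=\lambda(zw^*)$ and the non-degeneracy of the symmetrizing form $(u,v)\mapsto\lambda(uv)$ give exactly the equivalence $w\in C^\perp\iff ew^*=0\iff we^*=0$, and the idempotent calculus finishes things cleanly.

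Your route, however, is genuinely different from the paper's. The paper does not use the trace functional $\lambda$ or the non-degeneracy argument at all. Instead it first treats the minimal case $e=e[\rho]$, uses the explicit character formula for $e(\rho_i)$ to compute that $e[\rho]^*=e[\rho^*]$, and then applies the orthogonality relations $\langle\chi_{\rho_i},\chi_{\rho_j}\rangle=\delta_{ij}$ to verify $\langle e,e^*\rangle=\langle e,1\rangle$, whence $1-e^*\in C^\perp$; maximality of $(1-e^*)$ is then invoked to force equality. The general case is handled by writing $e=e_1+\cdots+e_k$ and intersecting. What the paper's approach buys is the identity $e[\rho]^*=e[\rho^*]$ as a by-product, which is essential input for Theorem~\ref{main1}; your proof, while slicker and representation-theory-free, does not produce that identity and one would have to supply it separately before proceeding to the main theorem. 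Conversely, your argument avoids the reduction to the minimal case and the somewhat delicate step of passing from $\langle e,1-e^*\rangle=0$ to $(1-e^*)\subseteq C^\perp$, which in the paper relies implicitly on more structure than is stated.
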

\begin{proof}
    Let us first consider the case when $C$ is minimal. Then, $e$ is primitive and hence of the form $e=e[\rho]$ where $\rho$ is an irreducible representation of $G$. Suppose that the $G_{\BF_q}$-orbit of $\rho$ consists of $\rho_1=\rho, \rho_2, \dots, \rho_s$. Then. $e=\sum\limits_{i=1}^se(\rho_i)$ which implies that 
    \begin{align*}
        e^*&=\sum\limits_{i=1}^se(\rho_i)^*\\&=\sum\limits_{i=1}^s\frac{d}{|G|}\sum\limits_{g\in G}\chi_{\rho_i}(g)g\\&=\sum\limits_{i=1}^s\frac{d}{|G|}\sum\limits_{g\in G}\chi_{\rho_i^*}(g^{-1})g\\&=\sum\limits_{i=1}^se(\rho_i^*)\\&=e[\rho^*].
    \end{align*}
    Here, $d$ is the dimension of $\rho$. It follows that 
    \begin{align*}
    <e,e^*>&=\sum\limits_{i,j=1}^s<e(\rho_i),e(\rho_j^*)>\\&=\sum\limits_{i,j=1}^s\frac{d^2}{|G|^2}\sum\limits_{g\in G}\chi_{\rho_i}(g)\chi_{\rho_j^*}(g)\\&=\sum\limits_{i,j=1}^s\frac{d^2}{|G|^2}\sum\limits_{g\in G}\chi_{\rho_i} 
    (g)\chi_{\rho_j}(g^{-1})\\&=\sum\limits_{i=1}^s\frac{d^2}{|G|}\\&=\frac{sd^2}{|G|}
    \end{align*}
    Since we also have $<e,1>=\sum\limits_{i=1}^s\frac{d}{|G|}\chi_{\rho_i}(e_G)=\frac{sd^2}{|G|}$, it follows that $1-e^*\in C^{\perp}$, i.e., $(1-e^*)\subseteq C^{\perp}$. Moreover, since $e^*=e[\rho^*]$ with $\rho^*$ is irreducible, we find that $1-e^*$ is the sum of all those primitive central idempotents except $e[\rho*]$, which implies that $(1-e^*)$ is maximal, so that $C^{\perp}=(1-e^*)$.
   
    Now we consider the general case so that $e=e_1+\dots+e_k$ is a sum of primitive central idempotents. Since distinct primitive central idempotents are orthogonal, it follows that 
    \begin{align*}
        C^{\perp}&=(e_1)^{\perp}\cap\dots\cap(e_k)^{\perp}\\&=(1-e_1^*)\cap\dots\cap(1-e_k^*)\\&=(\prod\limits_{i=1}^k(1-e_i^*))\\&=(1-e_1^*-\dots-e_k^*)\\&=(1-e^*).
    \end{align*}
\end{proof}
\begin{defn}
    For any $\rho\in Irr(G)$ let $s(\rho)$ be the number of elements in the $G_{\BF_q}$-orbit $[\rho]$ of $\rho$. If $d=\dim_{\overline{\BF}_q}\rho$ is the dimension of $\rho$, then we define $$d[\rho]=s(\rho)\times d^2.$$
\end{defn}
\begin{lem}
    For any $\rho\in Irr(G)$ we have $\dim_{\BF_q}(e[\rho])=d[\rho]$.
\end{lem}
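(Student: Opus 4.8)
The plan is to compute $\dim_{\BF_q}(e[\rho])$ after extending scalars to $\overline{\BF}_q$, where the structure of the group algebra is made completely explicit by the isomorphism $\Psi$ of \S2.2. First I would note that, as shown in \S2.2, $e[\rho]$ already lies in $\BF_q[G]$ and is a central idempotent; hence the two-sided ideal $(e[\rho])$ coincides with the left ideal $\BF_q[G]\,e[\rho]$, and in particular $\dim_{\BF_q}(e[\rho])=\dim_{\BF_q}\bigl(\BF_q[G]\,e[\rho]\bigr)$. Since $\overline{\BF}_q[G]=\BF_q[G]\otimes_{\BF_q}\overline{\BF}_q$ and since $\BF_q[G]\,e[\rho]$ is the image of the $\BF_q$-linear map $x\mapsto x\,e[\rho]$ on $\BF_q[G]$ (with $e[\rho]$ a fixed element of $\BF_q[G]$), extension of scalars gives a canonical $\overline{\BF}_q$-linear isomorphism $\bigl(\BF_q[G]\,e[\rho]\bigr)\otimes_{\BF_q}\overline{\BF}_q\cong\overline{\BF}_q[G]\,e[\rho]$. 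Therefore
\[
\dim_{\BF_q}(e[\rho])=\dim_{\overline{\BF}_q}\bigl(\overline{\BF}_q[G]\,e[\rho]\bigr).
\]

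Next I would evaluate the right-hand side using the isomorphism $\Psi:\overline{\BF}_q[G]\cong\prod_{i=1}^r M_{d_i}(\overline{\BF}_q)$, under which $e(\rho_i)$ corresponds to the idempotent $\epsilon_i$ projecting onto the $i$-th factor. Writing $[\rho]=\{\rho_1',\dots,\rho_s'\}$ for the $G_{\BF_q}$-orbit of $\rho$, so that $s=s(\rho)$, we have $e[\rho]=\sum_{j=1}^s e(\rho_j')$; since the $e(\rho_j')$ are pairwise orthogonal idempotents, $\Psi$ identifies $\overline{\BF}_q[G]\,e[\rho]$ with the product $\prod_{j=1}^s M_{d(\rho_j')}(\overline{\BF}_q)$ of the corresponding matrix factors. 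Hence $\dim_{\overline{\BF}_q}\bigl(\overline{\BF}_q[G]\,e[\rho]\bigr)=\sum_{j=1}^s d(\rho_j')^2$.

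Finally, all members of a single $G_{\BF_q}$-orbit have the same degree: if $\rho_j'=\sigma\circ\rho$ for some $\sigma\in G_{\BF_q}$, then by the very definition of the Galois action $\rho_j'$ has the same degree $d$ as $\rho$. Consequently $\sum_{j=1}^s d(\rho_j')^2=s\,d^2=d[\rho]$, which is the claim. There is no genuine difficulty here; the only point requiring a little care is the compatibility of the ideal $(e[\rho])$ with extension of scalars, and this comes down precisely to the fact established in \S2.2 that $e[\rho]$ is defined over $\BF_q$.
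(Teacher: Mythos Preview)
Your proof is correct and follows essentially the same approach as the paper: extend scalars to $\overline{\BF}_q$, decompose $(e[\rho])\otimes_{\BF_q}\overline{\BF}_q$ as $\bigoplus_{j=1}^{s}(e(\rho_j'))$, and use the Wedderburn decomposition to identify each factor with a matrix algebra of dimension $d^2$. You are slightly more explicit than the paper in justifying the compatibility of the ideal with scalar extension and in noting that all members of the Galois orbit share the same degree, but the argument is the same.
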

\begin{proof}
    Let $(e[\rho])\otimes_{\BF_q}\overline{F}_q$ be the scalar extension of $(e[\rho])$ from $\BF_q$ to $\overline{F}_q$. It suffices to show that $dim_{\overline{\BF}_q}(e[\rho])\otimes_{\BF_q}\overline{F}_q=d[\rho]$. Since $(e[\rho])\otimes_{\BF_q}\overline{\BF}_q=(e(\rho_1))\oplus\dots\oplus(e(\rho_s))$, where $s=s(\rho)$ and $\{\rho_1=\rho, \dots, \rho_s\}$ is the $G_{\BF_q}$-orbit of $\rho$, we are reduced to show that $(e(\rho_i))$ is of dimension $d^2$ over $\overline{\BF}_q$. This follows from the Weddburn decomposition (1.2.1).
\end{proof}

\subsection{Proof of Theorem~\ref{main1}}

\begin{proof}
    For any $C=(e)$ with $e=\sum\limits_{[\rho]\in T}e[\rho]$ for some subset $T\subseteq [Irr(G)]$, we have $e^*=\sum\limits_{[\rho]\in T}e[\rho^*]$, which implies that $h(C)=(e(1-e^*))$ with 
    \begin{align*}
        e(1-e^*)&=\sum\limits_{[\rho]\in T}e[\rho]\sum\limits_{[\xi^*]\notin T}e[\xi]\\&=\sum\limits_{[\rho]\in T^*}e[\rho],
    \end{align*}
    where $T^*=\{[\rho]\in T|[\rho^*]\notin T\}$. Therefore, the necessity follows. Conversely, given such a subset $S$ satisfying (1) and (2), let $e=\sum\limits_{[\rho]\in S}e[\rho]$. Then, $e(1-e^*)=e$ so that $C=(e)$ is a group code in $\BF_q[G]$ with its hull of dimension k.
\end{proof}

\section{Hulls of dimensions $\leq3$}

In this section we apply Theorem~\ref{main1} to study group codes with hulls of dimension $\leq3$. In particular, we will see that if the group is non-abelian and simple then there exists no group codes with hull of dimension $\leq3$. Finally, we investigate the dihedral codes.

\subsection{One-dimensional hulls}In this subsection we give the proof of Theorem~\ref{main2} which yields a criterion for when there exist $1$-dimensional hull.
\\

\textbf{Proof of Theorem~\ref{main2}:} By Theorem 2.1 the necessary and sufficient condition is that there is a subset $S$ of $[Irr(G)]$ such that (a)$\sum\limits_{[\rho]\in S}d[\rho]=1$; (b)$[\rho^*]\notin S$ for any $[\rho]\in S$. However, (a) holds if and only if $S$ consists of a single orbit which contains a single $1$-dimensional representation $\rho:G\to\BF_q^*$, and (b) then means that $\rho^{-1}\neq\rho$. Therefore, we find that there exists a group code $C\subseteq\BF_q[G]$ with $dim_{\BF_q}h(C)=1$ if and only if there is a non-quadratic homomorphism from $G$ to $\BF^*_q$. Since such a homomorphism must factor through the abelianization $G^{ab}$ of $G$, the necessary and sufficient condition turns out to be the existence of non-quadratic homomorphism from $G^{ab}$ to $\BF_q^*$. Write $$G^{ab}=\BZ/n_1\BZ\oplus\dots\oplus\BZ/n_r\BZ$$ with $1<n_1|n_2|\dots|n_r$ so that $n_r=exp(G^{ab})$ is equal to the exponent of $G^{ab}$. Since $\BF_q^*$ is cyclic of order $q-1$, it follows that $$Hom(G^{ab},\BF_q^*)\cong\bigoplus_{i=1}^r\BZ/(n_i,q-1)\BZ.$$ Thus, there exists a non-quadratic homomorphism from $G^{ab}$ to $\BF_q^*$ if and only if $(n_i,q-1)\ge3$ for some $1<i\le r$, or equivalently, $(n_r,q-1)\ge3$.

\subsection{Two-dimensional Hulls}For the existence of group codes with $2$-dimensional hulls we have the following result.

\begin{thm}\label{hull2}
    Let $\BF_q$ be a finite field and $G$ a finite group with $(|G|,q)=1$. Write $G^{ab}=\BZ/n_1\BZ\oplus\dots\oplus\BZ/n_r\BZ$ with $1<n_1\mid n_2\mid\cdots\mid n_r$. Then there exists a group code $C$ in $\BF_q[G]$ with $\dim_{F_q}h(C)=2$ if and 
    only if one of the following conditions is satisfied:
    \begin{enumerate}
        \item $(n_r,q-1)\ge5$;
        \item $3\le(n_r,q-1)\le4$ and $2\le(n_{r-1},q-1)$;
        \item $(n_r,q^2-1)>(n_r,q-1)$ and $(n_r,q^2-1)>(n_r,q+1)$.
    \end{enumerate}
\end{thm}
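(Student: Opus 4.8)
The plan is to reduce the existence of a $2$-dimensional hull to an arithmetic condition on $G^{\ab}$, in the same spirit as the proof of Theorem~\ref{main2}, but now tracking $2$-dimensional contributions more carefully. By Theorem~\ref{main1}, we must decide when there is a subset $S\subseteq[Irr(G)]$ with $\sum_{[\rho]\in S}d[\rho]=2$ and $[\rho^*]\notin S$ for all $[\rho]\in S$. The first step is to enumerate the ways the sum $\sum_{[\rho]\in S}d[\rho]$ can equal $2$. Since $d[\rho]=s(\rho)d(\rho)^2$ is a positive integer, either (i) $S$ consists of two distinct orbits each with $d[\rho_i]=1$, i.e. each $\rho_i$ is a $1$-dimensional representation fixed by $G_{\BF_q}$ (so $\rho_i:G\to\BF_q^\times$), or (ii) $S$ consists of a single orbit $[\rho]$ with $d[\rho]=2$, which forces $d(\rho)=1$ and $s(\rho)=2$, i.e. $\rho:G\to\overline{\BF}_q^\times$ is a character whose values generate $\BF_{q^2}$ but not $\BF_q$ and whose $G_{\BF_q}$-orbit $\{\rho,\rho^q\}$ has exactly two elements. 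In both cases every representation involved factors through $G^{\ab}$, so the whole question is about characters of $G^{\ab}$, and I would identify $Hom(G^{\ab},\overline{\BF}_q^\times)$ with $\bigoplus_i \BZ/n_i\BZ$ (after choosing a compatible root of unity), with the $G_{\BF_q}$-action being multiplication by $q$.

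Next I would analyze case (i): we need two \emph{distinct} characters $\chi_1,\chi_2:G^{\ab}\to\BF_q^\times$, each non-self-inverse ($\chi_i^{-1}\neq\chi_i$), and moreover — to satisfy condition (2) of Theorem~\ref{main1} — with $\chi_1^{-1}\neq\chi_2$ as well (otherwise $[\rho_2^*]=[\rho_1]\in S$). Since $Hom(G^{\ab},\BF_q^\times)\cong\bigoplus_i\BZ/(n_i,q-1)\BZ$ is a finite abelian group $H$ of exponent $(n_r,q-1)$, the requirement is: $H$ contains two elements $x_1,x_2$ of order $>2$ with $x_1\neq -x_2$. A short counting argument (pairing each $x$ with $-x$; elements of order $\le 2$ form a subgroup) shows this is possible exactly when $H$ has more than four elements among $\{0\}\cup\{$order $2\} \cup \{x,-x\}$ available, which I expect to translate precisely into ``$(n_r,q-1)\ge 5$, or $(n_r,q-1)\in\{3,4\}$ together with a second independent factor, i.e. $(n_{r-1},q-1)\ge 2$'' — matching conditions (1) and (2). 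The bookkeeping here (how many non-self-inverse elements $H$ has, and when two of them avoid being mutual inverses) is the part requiring the most care, since $H$ can be a product of two cyclic groups and one must handle the $3$ and $4$ cases by hand.

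Then case (ii): here I need a character $\chi:G^{\ab}\to\overline{\BF}_q^\times$ with exact orbit size $2$ under $x\mapsto qx$, i.e. $\chi^{q^2}=\chi$ but $\chi^q\neq\chi$; and condition (2) requires $[\rho^*]\neq[\rho]$, i.e. $\chi^{-1}\notin\{\chi,\chi^q\}$, so $\chi^2\neq 1$ and $\chi^{q+1}\neq 1$. Writing $\chi$ via its order $m\mid\exp(G^{\ab})$, these become: $m\mid q^2-1$ but $m\nmid q-1$, and $m\nmid 2$, and $m\nmid q+1$. The existence of such an $m$ dividing $n_r$ amounts to $(n_r,q^2-1)$ strictly exceeding both $(n_r,q-1)$ and $(n_r,q+1)$ — this should reproduce condition (3) after checking that an $m$ with the stated divisibility pattern automatically has order-of-orbit exactly $2$ and automatically violates $m\mid 2$ when it violates $m\mid q-1$ and $m\mid q+1$ (since if $m\mid 2$ and $q$ is odd then $m\mid q-1$). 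Finally I would assemble the three cases, checking they are exhaustive and that (3) is genuinely independent of (1),(2) (e.g. $q=2$, $n_r=3$ satisfies (3) but not (1),(2)), to conclude the ``if and only if.'' The main obstacle throughout is the combinatorial case-by-case verification in case (i) for small values $3$ and $4$; everything else is a routine translation via the structure of $Hom(G^{\ab},\overline{\BF}_q^\times)$.
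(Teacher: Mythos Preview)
Your proposal is correct and follows essentially the same route as the paper: reduce via Theorem~\ref{main1} to the two cases $|S|=2$ (two $\BF_q$-valued characters) and $|S|=1$ (one character with Galois orbit of size $2$), then translate each into arithmetic conditions on $\Hom(G^{\ab},\BF_{q^i}^\times)\cong\bigoplus_j\BZ/(n_j,q^i-1)\BZ$. The only cosmetic difference is in case (ii), where you argue via the order $m$ of $\chi$ (taking $m=(n_r,q^2-1)$ directly) while the paper instead proves the counting inequality $(n_r,q^2-1)>(n_r,q-1)+(n_r,q+1)$; both arguments establish the same equivalence with condition~(3).
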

\begin{proof}
    It follows from Theorem 2.1 that there exists a group code $C$ with a $2$-dimensional hull if and only if there is a subset $S\subseteq[Irr(G)]$ such that (a)$\sum\limits_{[\rho]\in S}d[\rho]=2$; (b)$[\rho^*]\notin S$ for any $[\rho]\in S$. It is natural to distinguish into the the following two cases:
    \begin{enumerate}
        \item[(A)] Let us first consider the case when $S$ consists of two orbits. Then, by condition (a), each must be a 1-dimensional representation with values in $\BF_q$. We denote them as $\rho_i:G\to\BF_q^*$ where $i=1,2$. Moreover, by condition (b), $\rho_1,\rho_2$ are not quadratic characters and we have $\rho_1^{-1}\neq\rho_2$. Since $$\Hom(G^{ab},\BF_q^*)\cong\bigoplus_{i=1}^r\BZ/(n_i,q-1)\BZ,$$
this is equivalent to the existence of a subset, also denote as $S$, in the right hand side of the above isomorphism consisting of two elements $x_1,x_2$ satisfying $2x_1,2x_2\neq0$ and $x_2\neq -x_1$. 

We claim that such an $S$ exists if and only if (1) or (2) holds. Indeed, we necessarily have $(n_r,q-1)\ge3$ as $2x_i\neq1$ for each $i=1,2$. If $(n_r,q-1)\ge5$, then it suffices to take $x_1=(0,\ldots,0,1)$ and $x_2=(0,\ldots,0,2)$. So, it remains to consider the situation when $3\le(n_r,q-1)\le4$. It is clear that in this situation we necessarily have $(n_{r-1},q-1)\ge2$. Conversely, if $(n_{r-1},q-1)\ge2$, then we may take $x_1=(0,\ldots,0,1)$ and $x_2=(0,\ldots,a,1)$ with $a$ any non-zero element in $\BZ/(n_r-1,q-1)\BZ$.
        \item[(B)]    We next consider the case when $S$ has a single orbit, represented by anon-quadratic 1-dimensional representation $\rho:G\to\BF_{q^2}^*$  such that $\Im(\rho)\nsubseteq\BF_q^*$ and $\rho^q\neq\rho^{-1}$. Since
$$\Hom(G^{ab},\BF_{q^2}^*)\cong\bigoplus_{i=1}^r\BZ/(n_i,q^2-1)\BZ,$$
this is equivalent to the existence of an element $x$ in the right hand side of the above isomorphism such that $2x,(q-1)x$ and $(q+1)x$ are all nonzero..

We claim that such an $x$ exists if and only if (3) holds. The necessity is clear, so it is enough to show that (3) is also sufficient for the existence of such an $x$. Since $\BZ/n\BZ[m]\simeq\BZ/(n,m)\BZ$ for any integers $n,m\geq1$, and a cyclic abelian group can have at most two $2$-torsion elements, it follows that we only need to show that
$$(n_r,q^2-1)>(n_r,q-1)+(n_r,q+1).$$
Indeed, under condition (3), we have $q\neq2$. By symmetry we may assume that $q\equiv3\pmod4$. If $\frac{(n_r,q^2-1)}{(n_r,q+1)}>2$, then $\frac{(n_r,q-1)}{(n_r,q^2-1)}+\frac{(n_r,q+1)}{(n_r,q^2-1)}\le\frac{1}{2}+\frac{1}{3}<1$. On the other hand, if $\frac{(n_r,q^2-1)}{(n_r,q+1)}=2$, then $2|n_r$ so that $\frac{(n_r,q^2-1)}{(n_r,q-1)}\ge4$ and $\frac{(n_r,q-1)}{(n_r,q^2-1)}+\frac{(n_r,q+1)}{(n_r,q^2-1)}\le\frac{1}{4}+\frac{1}{2}<1$, whence the above inequality and the assertion follows.
\end{enumerate}
\end{proof}

\subsection{Three dimensional hulls}For the existence of group codes with $2$-dimensional hulls we have the following result.

\begin{thm}\label{hull3}
    Let $\BF_q$ be a finite field and $G$ be a finite group with $(|G|,q)=1$.  Write $G^{ab}=\BZ/n_1\BZ\oplus\dots\oplus\BZ/n_r\BZ$ with $1<n_1\mid n_2\mid\cdots\mid n_r$. Then there exists a group code $C$ in $\BF_q[G]$ with $\dim_{F_q}h(C)=3$ if and 
    only if one of the following conditions is satisfied:
    \begin{enumerate}
        \item $(n_r,q-1)\ge7$
        \item $5\le(n_r,q-1)\le6$ and $(n_{r-1},q-1)\ge2$
        \item $3\le(n_r,q-1)\le4$ and $(n_{r-1},q-1)\ge3$
        \item $3\le(n_r,q-1)\le4$ and $(n_{r-1},q-1)=(n_{r-2},q-1)=2$
        \item $(n_r,q^2-1)>(n_r,q-1)\ge3$ and $(n_r,q^2-1)>(n_r,q+1)$
        \item $(n_r,q^3-1)>(n_r,q-1)$. 
    \end{enumerate}
\end{thm}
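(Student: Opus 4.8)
\textbf{Proof proposal for Theorem~\ref{hull3}.}

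The plan is to run exactly the same analysis as in the proof of Theorem~\ref{hull2}, now with the target $\sum_{[\rho]\in S}d[\rho]=3$. Since $3$ can only be written as $1+1+1$ or as $1+\text{(a single orbit of size }3)$ or as a single orbit $[\rho]$ with $s(\rho)d(\rho)^2=3$ (so either $d(\rho)=1$ and $s(\rho)=3$, or this is impossible for $d(\rho)>1$), I would organize the argument into three cases according to the ``shape'' of $S$: (A) $S$ consists of three distinct orbits, each a $1$-dimensional character $G\to\BF_q^*$; (B) $S$ is one orbit of size $1$ (a character into $\BF_q^*$) together with one orbit of size $2$ (a character $\rho:G\to\BF_{q^2}^*$ with $\Im\rho\not\subseteq\BF_q^*$); (C) $S$ is a single orbit of size $3$, i.e. a character $\rho:G\to\BF_{q^3}^*$ whose Galois orbit under $G_{\BF_q}$ has exactly three elements, equivalently $\Im\rho\subseteq\BF_{q^3}^*$ but $\Im\rho\not\subseteq\BF_q^*$ (here I use that $3$ is prime, so the orbit size, which divides $[\BF_{q^3}:\BF_q]=3$ once the image lands in $\BF_{q^3}$, is either $1$ or $3$). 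As before, condition (b) of Theorem~\ref{main1} translates in each case into avoiding certain ``bad'' relations among the chosen characters, and everything reduces via $\Hom(G^{ab},\BF_{q^m}^*)\cong\bigoplus_{i=1}^r\BZ/(n_i,q^m-1)\BZ$ to a purely combinatorial existence question in a finite abelian group.

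In case (A), I would translate the requirement to: there exist $x_1,x_2,x_3\in\bigoplus_i\BZ/(n_i,q-1)\BZ$, pairwise distinct, with $2x_j\neq0$ for all $j$ and $x_i\neq -x_j$ for all $i\neq j$ (the non-quadratic condition and the ``$[\rho_i^*]\notin S$'' condition). Counting: the number of elements $x$ with $2x\neq 0$ in a cyclic group $\BZ/m\BZ$ is $m$ minus the number of $2$-torsion points ($1$ if $m$ odd, $2$ if $m$ even); and the pairing $x\mapsto -x$ is a fixed-point-free-off-$2$-torsion involution, so one needs enough non-$2$-torsion elements to pick three no two of which are negatives of each other. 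Working first in the top cyclic factor $\BZ/(n_r,q-1)\BZ$ alone, one finds such a triple exists iff $(n_r,q-1)\ge7$ (this gives (1)); if $(n_r,q-1)\in\{5,6\}$ one extra ``dimension'' suffices, i.e. $(n_{r-1},q-1)\ge2$ (this gives (2)); if $(n_r,q-1)\in\{3,4\}$ one checks one more factor of order $\ge3$ works, or two more factors of order $2$ each work, and nothing smaller does — yielding (3) and (4), with explicit witnesses like $x_1=(\dots,0,1)$, $x_2=(\dots,0,2)$, $x_3=(\dots,a,0,1)$ etc. I expect (4) to be the fiddliest: one must verify that when $(n_r,q-1)\le4$ and only a single extra factor of order $2$ is available the triple cannot be completed, forcing two such factors.

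In case (B) the condition becomes: there exist $x\in\bigoplus_i\BZ/(n_i,q-1)\BZ$ and $y\in\bigoplus_i\BZ/(n_i,q^2-1)\BZ$ with $2x\neq0$, with $y$ of genuine order-$2$-over-$\BF_q$ type (i.e. $2y,(q-1)y,(q+1)y$ all nonzero, exactly the condition from Theorem~\ref{hull2}(3)), and with the cross-condition coming from (b) that $x$ is not the ``norm/restriction'' obstruction to $y$ — but in fact a short check shows the $1$-dimensional orbit and the $2$-dimensional orbit are automatically non-conjugate and non-dual to each other once each individually satisfies its own constraint, so case (B) contributes exactly ``$(n_r,q-1)\ge3$ and the Theorem~\ref{hull2}(3) condition holds'', which after the same manipulation as in Theorem~\ref{hull2} is condition (5). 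In case (C) the condition is: there exists $x\in\bigoplus_i\BZ/(n_i,q^3-1)\BZ$ whose image generates a subgroup not contained in the $\BF_q$-rational part, together with $\rho^{q}\neq\rho^{-1}$, $\rho^{q^2}\neq\rho^{-1}$, $\rho$ non-quadratic; unwinding, the only real constraint is that the order of $x$ divides $q^3-1$ but not $q-1$, which exists iff $(n_r,q^3-1)>(n_r,q-1)$ — condition (6). I would finish by noting the cases are exhaustive and, where they overlap, harmless, so $C$ with $\dim h(C)=3$ exists iff at least one of (1)--(6) holds.

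The main obstacle I anticipate is the bookkeeping in case (A): carefully enumerating, in the finite abelian group $\bigoplus_{i}\BZ/(n_i,q-1)\BZ$ with its divisibility constraints $n_1\mid\cdots\mid n_r$, precisely which configurations of the $(n_i,q-1)$ admit three points that are pairwise distinct, pairwise non-negatives, and individually non-$2$-torsion — and proving the negative direction (that smaller configurations genuinely fail) rather than just exhibiting witnesses. The cross-term compatibility in case (B) (showing the size-$1$ and size-$2$ orbits never clash) also needs a clean argument, but I expect it to be short once the contragredient bookkeeping from the proof of the duality proposition is invoked.
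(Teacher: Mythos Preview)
Your proposal is correct and follows essentially the same route as the paper's proof: the same three-case split (A) three singleton orbits, (B) one singleton plus one size-$2$ orbit, (C) one size-$3$ orbit, the same translation via $\Hom(G^{ab},\BF_{q^m}^*)\cong\bigoplus_i\BZ/(n_i,q^m-1)\BZ$, and the same identification of each case with conditions (1)--(4), (5), and (6) respectively. Your treatment of case (C) is in fact slightly cleaner than the paper's---you correctly observe that once the order $m$ of $x$ divides $q^3-1$ but not $q-1$, one automatically has $m\geq3$ (since $2\mid q-1$ whenever $2\mid q^3-1$), and then $\gcd(q+1,q^3-1)\mid2$ and $\gcd(q^2+1,q^3-1)\mid2$ dispose of the remaining self-duality constraints; you should spell this ``unwinding'' out explicitly rather than assert it.
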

\begin{proof}
    It follows from Theorem 2.1 that there exists a group code $C$ with a $3$-dimensional hull if and only if there is a subset $S\subseteq[Irr(G)]$ such that (a)$\sum\limits_{[\rho]\in S}d[\rho]=3$; (b)$[\rho^*]\notin S$ for any $[\rho]\in S$. It is natural to distinguish into the the following three cases:
    \begin{enumerate}
        \item[(A)]Let us first consider the case when $S$ consists of three orbits. Then, by condition (a), each must be a 1-dimensional representation with values in $\BF_q$. We denote them by $\rho_i:G\to\BF_q^*$ where $i=1,2,3$. Moreover, by condition (b), $\rho_1,\rho_2,\rho_3$ are not quadratic characters and we have $\rho_i^{-1}\neq\rho_j$ for any $i\neq j$. Since $$\Hom(G^{ab},\BF_q^*)\cong\bigoplus_{i=1}^r\BZ/(n_i,q-1)\BZ,$$
this is equivalent to the existence of a subset, also denote as $S$, in the right hand side of the above isomorphism, consisting of two elements $x_1,x_2,x_x$ satisfying $2x_1,2x_2,2x_3\neq0$ and $x_i\neq -x_j$ for any $i\neq j$. 

We claim that such a subset $S$ exists if and only if one of (1), (2), (3) or (4) holds. Indeed, since $2x_i\neq0$, we necessarily have $(n_r,q-1)\ge3$. Therefore,
\begin{itemize}
    \item If $(n_r,q-1)\ge7$, then we may take $x_i=(0,\ldots,0,i)$ with $1\leq i\leq3$. 
    \item If $5\le(n_r,q-1)\le6$, then the last direct summand $\BZ/(n_r,q-1)\BZ$ yields only two such elements, say, $x_1=(0,\ldots,0,1)$ and $x_2=(0,\ldots,0,2)$. Thus, we must also have  $(n_{r-1},q-1)\ge2$.  Conversely, if $(n_{r-1},q-1)\ge2$, then we may take $x_3=(0,\ldots,0,a,1)$ for any nonzero $a$.
    \item If $3\le(n_r,q-1)\le4$, then the last direct summand $\BZ/(n_r,q-1)\BZ$ yields only one such elements, say, $x_1=(0,\ldots,0,1)$. Thus, we must also have  $(n_{r-1},q-1)\ge2$.  

Now, if $(n_{r-1},q-1)\ge3$, then we may take $x_2=(0,\ldots,0,a,1)$ and $x_3=(0\ldots,0,b,1)$ for distinct nonzero $a,b$. Finally, if $(n_{r-1},q-1)=2$, then we must have also $(n_{r-2},q-1)\geq2$, which is hence equal to $2$ as it is a divisor of $(n_{r-1},q-1)$. Conversely, if $(n_{r-2},q-1)=2$, then we may take $x_3=(0,\ldots,0,1,1,1)$.
\end{itemize}
        \item[(B)]We next consider the case where $S$ consists of two orbits. Then, by condition (a), one is a non-quadratic character $\chi:G\to\BF_q^*$, and the other is represented by a non-quadratic 1-dimensional representation $\psi:G\to\BF_{q^2}^*$ with $\psi^{q}\neq\psi^{\pm1}$. By Theorem~\ref{main2} and Theorem~\ref{hull2}, such an $S$ exists if and only if (5) holds.
        \item[(C)]Finally, we consider the case where $S$ has a single orbit, represented by a $1$-dimensional representation $\rho:G\to\BF_{q^3}^*$ such that $\Im(\rho_i)\nsubseteq\BF_q^*$ and $\rho^{-1}\neq\rho, \rho^q$ or $\rho^{q^2}$. Since $$\Hom(G^{ab},\BF_{q^3}^*)\cong\bigoplus_{i=1}^r\BZ/(n_i,q^3-1)\BZ,$$
this is equivalent to the existence of a subset, also denote as $S$, in the right hand side of the above isomorphism, consisting of  a single element $x$ satisfying $2x\neq0$ and $(q-1)x,(q^2-1)x$ are both nonzero. 

We claim that such an $S$ exists if and only if (6) holds. The necessity is clear. Conversely, if (6) holds. then $(n_r,q^3-1)\ge3(n_r,q-1)>(n_r,q-1)+1$ as $q^2+q+1$ is odd, which implies that there exists an $x\in\BZ/(n_r,q^3-1)\BZ$ such that $2x\neq0$ and $(q-1)x\neq0$. Since $(q^3-1,q+1)=2$, we must also have $(q+1)x\neq0$, whence the assertion.
    \end{enumerate}
    
\end{proof}

\subsection{An example-Dihedral case}
We end this section with a more concrete example, namely, dihedral group codes. Let $n$ be a positive integer and denote by $D_n=\BZ/n\BZ\rtimes\BZ/2\BZ$ the dihedral group generated by $t,s$ where $r$ (resp. $s$) has order $n$ (resp. $2$) such that $st=t^{-1}s$. Our result is as follows.

\begin{prop}
Let $\BF_q$ be a finite field and $n\geq1$ an integer such that $(n,q)=1$, then there exists no group code in $\BF_q[D_n]$ with hulls of dimension $\le8$.
\end{prop}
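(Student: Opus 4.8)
The plan is to apply the classification from Theorems~\ref{main1}, \ref{hull2}, and \ref{hull3} (and their obvious extensions to dimensions $4$ through $8$) to the specific group $G = D_n$, which requires understanding the representation theory of dihedral groups over $\overline{\BF}_q$ together with the Galois action of $G_{\BF_q}$. First I would recall that, since $(|D_n|, q) = 1$ and $|D_n| = 2n$, so in particular $q$ is odd, the irreducible representations of $D_n$ over $\overline{\BF}_q$ are: the trivial and sign characters (both $1$-dimensional, with values in $\BF_q^*$, hence each its own Galois orbit contributing $d[\rho] = 1$); if $n$ is even, two more $1$-dimensional characters (again $\BF_q$-valued, each contributing $1$); and a family of $2$-dimensional representations $\rho_j$ indexed by $j \in (\BZ/n\BZ)/\{\pm 1\}$ with $j \not\equiv -j$, induced from the character $t \mapsto \zeta^j$ of $\BZ/n\BZ$ where $\zeta$ is a primitive $n$-th root of unity. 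Each such $\rho_j$ has $d(\rho_j) = 2$, so $d[\rho_j] = s(\rho_j) \cdot 4$, a multiple of $4$.

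Next I would determine the contragredients and Galois orbits. The key structural fact is that $\rho_j^* \cong \rho_j$ for \emph{every} $2$-dimensional irreducible $\rho_j$: indeed $\rho_j$ is induced from $\chi_j: t \mapsto \zeta^j$, its contragredient is induced from $\chi_j^{-1}: t\mapsto \zeta^{-j}$, and since the Weyl action $j \mapsto -j$ is exactly the equivalence defining the index set, $\rho_j$ and $\rho_j^*$ are isomorphic. Consequently condition (2) of Theorem~\ref{main1} ($[\rho^*] \notin S$) fails for any set $S$ containing the orbit of a $2$-dimensional representation; so no $2$-dimensional $\rho_j$ may contribute to the hull. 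Likewise, the trivial character $\mathbf{1}$ satisfies $\mathbf{1}^* = \mathbf{1}$, and the sign character $\varepsilon$ (and the other order-$2$ characters when $n$ is even) are quadratic, so $\varepsilon^* = \varepsilon^{-1} = \varepsilon$; thus none of the $1$-dimensional characters of $D_n$ may lie in $S$ either. Hence $S = \varnothing$ is the only admissible subset, forcing $\dim_{\BF_q} h(C) = 0$: the only group code with any hull dimension $\leq 8$ (indeed, any hull at all) is one whose hull is $(0)$.

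More carefully, to get the stated bound I would argue: any admissible $S$ contributes $\sum_{[\rho]\in S} d[\rho]$, and every orbit $[\rho]$ that could possibly lie in a valid $S$ must have $\rho$ with $[\rho^*]\notin S$; the preceding paragraph shows every $1$-dimensional character of $D_n$ has $\rho^*\cong\rho$, and similarly every $2$-dimensional $\rho_j$ has $\rho_j^*\cong\rho_j$. So no orbit at all can belong to an $S$ satisfying (2) unless $S=\varnothing$. Therefore the only attainable hull dimension is $0$, and in particular there is no group code in $\BF_q[D_n]$ with $1 \leq \dim_{\BF_q} h(C) \leq 8$.

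The main obstacle — really the only substantive point — is the verification that $\rho_j^* \cong \rho_j$ for the $2$-dimensional representations, i.e.\ that all of them are self-dual; this is where one uses the specific structure $st = t^{-1}s$, which makes the nontrivial element $s$ of the quotient act by the inversion that identifies the character $\chi_j$ with $\chi_j^{-1}$ inside the induced representation. Concretely one exhibits an explicit intertwiner: in the standard basis where $\rho_j(t) = \mathrm{diag}(\zeta^j, \zeta^{-j})$ and $\rho_j(s) = \begin{pmatrix} 0 & 1 \\ 1 & 0 \end{pmatrix}$, one checks that $T = \begin{pmatrix} 0 & 1 \\ 1 & 0 \end{pmatrix}$ conjugates $\rho_j(g)$ to ${}^T\rho_j(g)^{-1}$ for $g \in \{t, s\}$, hence for all $g$. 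The rest is bookkeeping of characters, for which one need not chase the Galois orbit sizes $s(\rho_j)$ at all, since the obstruction from condition (2) already kills every candidate orbit.
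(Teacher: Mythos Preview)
Your proposal is correct and follows essentially the same approach as the paper: both apply Theorem~\ref{main1} and reduce to showing that every irreducible representation of $D_n$ of dimension $\le 2$ is self-dual, handling the $1$-dimensional case via $G^{\ab}$ being $2$-torsion and the $2$-dimensional case via the relation $sts^{-1}=t^{-1}$. The only cosmetic difference is that the paper verifies $\rho_j\cong\rho_j^*$ by a character computation using the decomposition $\rho_j=W\oplus sW$, whereas you exhibit an explicit intertwiner; your observation that in fact \emph{no} nonzero hull dimension is attainable (not just $\le 8$) is implicit in the paper's argument as well.
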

\begin{proof}
Since $d[\rho]=s(\rho)\cdot\dim_{\overline{\BF}_q}^2\rho$, we only need to prove that $[\rho]=[\rho^*]$ for any $\rho\in IrrG$ with dimension $\le2$. Since 
\begin{align*}
G^{ab}\simeq
\begin{cases}
\BZ/2\BZ,\text{ if }2\nmid n,\\
\BZ/2\BZ\oplus\BZ/2\BZ,\text{ if }2\mid n,    
\end{cases}
\end{align*}
$G^{ab}$ is annihilated by $2$ so that $\rho=\rho^{-1}$ for any $1$-dimensional $\rho$. Therefore, it remains to consider the case when $\dim_{\overline{\BF}_q}\rho=2$. Let $W$ be an invariant subspace of $\rho(r)$. Then, $sW$ is also invariant under the action of $r$ so that $\rho=W\oplus sW$ as it is irreducible. It follows that $\chi_{\rho}=\chi_{\rho^*}$, which implies that $\rho^{-1}$ is isomorphic to $\rho$.
\end{proof}

\end{document}